\documentclass[aps,prl,twocolumn,superscriptaddress,showpacs,floatfix]{revtex4-1}
%{revtex4}
\usepackage[english]{babel}
\usepackage{graphicx}
\usepackage{stmaryrd}
\usepackage{amssymb}
\usepackage{amsfonts}
\usepackage{amsmath}
\usepackage{dsfont}
\usepackage{color}
\usepackage{xspace}
\usepackage{multirow}
\usepackage{comment}

\usepackage{bm}
\usepackage{amsthm}

\DeclareSymbolFont{largesymbols}{OMX}{cmex}{m}{n}

\newcommand{\calH}{{\mathcal H}}

\newcommand{\calK}{{\mathcal K}}
\newcommand{\calL}{{\mathcal L}}
\newcommand{\kappamax}{\kappa_{\mbox{\scriptsize max}}}
\newtheorem{theorem}{Theorem}

\pagestyle{plain}
\voffset=0.0in
\DeclareGraphicsExtensions{.pdf,.ps,.eps}

\begin{document}

\title{
Ground-state Energies of Spinless Free Fermions and Hard-core Bosons}
\author{Wenxing Nie}
\affiliation{Institute for Solid State Physics, University of Tokyo,
Kashiwa 277-8581, Japan}
\author{Hosho Katsura}
\affiliation{Department of Physics, Gakushuin University, Tokyo 171-8588, Japan}
\author{Masaki Oshikawa}
\affiliation{Institute for Solid State Physics, University of Tokyo,
Kashiwa 277-8581, Japan}
\date{\today}

\begin{abstract}
We compare the groundstate energies of bosons and fermions with the
same form of the Hamiltonian. If both are noninteracting,
the groundstate energy of bosons is always lower, owing to Bose-Einstein
Condensation. However, the comparison is nontrivial when bosons
do interact. We first prove that, when the hopping is unfrustrated
(all the hopping amplitudes are non-negative), hard-core bosons
still must have a lower groundstate energy than fermions.
If the hopping is frustrated, bosons can have higher groundstate
energy than fermions. We prove rigorously that
this inversion indeed occurs in several examples.
\end{abstract}

\pacs{05.30.-d,71.10.-w,71.10.Fd,05.30.Jp}

\maketitle

\textit{Introduction. ---}
Statistics of identical particles is one of the most fundamental
concepts in quantum physics.
A consequence of the particle statistics appears
in the groundstate (GS) energy.
For a system of free particles, the GS of bosons is
obtained by putting all the particles in the lowest-energy state
of the single-particle Hamiltonian, while the GS of fermions is
obtained by filling the individual single-particle states up to the Fermi level (Pauli exclusion principle).
Thus, the GS energy of bosons $E_0^{\rm B}$  and
that of fermions $E_0^{\rm F}$, for the same form of the Hamiltonian, satisfy
\begin{equation}
E_0^{\rm B}\leq E_0^{\rm F},
\label{eq.e0b.leq.e0f}
\end{equation}
if the particles are non-interacting.

The comparison becomes nontrivial
when the particles do interact;
the Bose-Einstein condensation
is no longer perfect in interacting systems.
Intuitively, it would be still natural to expect
that Eq.~\eqref{eq.e0b.leq.e0f} holds.
However, recently
an apparent counterexample was found
numerically~\cite{HuberAltman2010,Altman-priv}.
This motivates us to examine the fundamental question:
how general is the ``natural'' inequality~\eqref{eq.e0b.leq.e0f}
and when can it be actually violated?
%how general the ``natural'' inequality~\eqref{eq.e0b.leq.e0f} is
%and when it can be actually violated.

To simplify the matter, in this paper we focus on the comparison
of hard-core bosons with spinless free fermions.
(See also Refs.~\cite{Henley1,Henley2}.)
The Hamiltonian is given by
%as
\begin{equation}
\calH=-\sum_{j,k}\left(
t_{jk}c^{\dagger}_jc_k+\mbox{H.c.}
\right)
-\sum_j\mu_jn_j
+\sum_{j,k}V_{jk}n_jn_k,
\label{eq.Ham}
\end{equation}
where each site $j$ belongs to a lattice $\Lambda$,
$n_j\equiv c^{\dagger}_jc_j$, and $t_{jk}=0$ is assumed for $j=k$.
The uniform (site independent) part of $\mu_j$ is the chemical
potential $\mu$.
For a system of fermions (bosons), we identify $c_j$ with the fermion (boson)
annihilation operator $f_j$ ($b_j$) satisfying the standard anticommutation (commutation) relations.
%For a system of bosons, we identify $c_j$ with the boson
%annihilation operator $b_j$ satisfying the standard commutation
%relations, with the hard-core constraint $n_j = 0,1$ at
%each site.
For bosons, the hard-core constraint ($n_j = 0,1$) may be implemented by introducing
the on-site interaction $U\sum_jn_j(n_j - 1)$ and then
taking $U\to\infty$.

We note that the Hamiltonian~\eqref{eq.Ham} conserves
the total particle number.
Thus the GS can be defined for
a given number of particles $M$ (canonical ensemble),
or for a given chemical potential $\mu$ (grand canonical ensemble).
The comparison between bosons and fermions can be made in
either circumstance.

\textit{Natural Inequality. ---}
First we present a sufficient condition for the
``natural'' inequality~\eqref{eq.e0b.leq.e0f} to hold.
(See Ref.~\cite{Becca_proofbyPF} for a similar inequality for
spinful fermions.)
Furthermore, we find a sufficient condition for the
strict inequality $E_0^{\rm B}  < E_0^{\rm F}$ to hold.
The proof also gives us a physical insight into the reason why
the inequality still holds even in interacting systems,
where the simple argument based on a perfect
Bose-Einstein condensation of bosons breaks down.
% within considerations of hard-core bosons and spinless free fermions.
\begin{theorem}
The inequality~\eqref{eq.e0b.leq.e0f} holds for any
given number of particles $M$ on a finite lattice $\Lambda$ with
$N \geq M$ sites,
if all the hopping amplitudes $t_{jk}$ are real and non-negative.
Furthermore,
if the lattice $\Lambda$ is connected and has a site connected
to three or more sites, and if the number of particles satisfies
$2 \leq M \leq N-2$, the strict inequality $E_0^{\rm B} < E_0^{\rm F}$ holds.
\label{thm.natural}
\end{theorem}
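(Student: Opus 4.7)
The plan is to apply a Perron-Frobenius argument to the hard-core boson Hamiltonian in the $M$-particle occupation-number basis. Label configurations by subsets $C\subset\Lambda$ with $|C|=M$ and set $|C\rangle=\prod_{j\in C}c_j^\dagger|0\rangle$, using a fixed site ordering in the fermion case. Diagonal matrix elements of $\calH$ coming from $\mu_j$ and $V_{jk}$ coincide in the two cases, while the off-diagonal hopping entries are $-t_{jk}\leq 0$ for hard-core bosons but $-t_{jk}\epsilon_{CC'}$ for fermions, with $\epsilon_{CC'}=\pm 1$ the fermionic anticommutation sign.

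For the weak inequality, let $\psi^F$ be a fermion ground state (real, since the fermion Hamiltonian is real symmetric in this basis) and define a trial hard-core boson state by $\tilde\psi^B(C):=|\psi^F(C)|$. The diagonal contributions to $\langle\calH\rangle$ agree in the two states, while $-t_{jk}\epsilon_{CC'}\psi^F(C)\psi^F(C')\geq -t_{jk}|\psi^F(C)||\psi^F(C')|$ holds term by term because $t_{jk}\geq 0$. Summing, $E_0^F=\langle\psi^F|\calH|\psi^F\rangle\geq\langle\tilde\psi^B|\calH|\tilde\psi^B\rangle\geq E_0^B$.

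For the strict inequality, suppose $E_0^B=E_0^F$. Connectivity of $\Lambda$ implies connectivity of the $M$-particle configuration graph, so a suitable constant shift of $-\calH$ is an irreducible nonnegative matrix in the bosonic basis and Perron-Frobenius forces the bosonic ground state to be unique and strictly positive. Then $\tilde\psi^B$ must coincide with this ground state, so $|\psi^F(C)|>0$ for every $C$ and every term-by-term inequality above is saturated, yielding $\mathrm{sign}[\psi^F(C)\psi^F(C')]=\epsilon_{CC'}$ on each hop with $t_{jk}>0$.

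The genuine obstacle is to exhibit a cycle in the configuration graph along which $\prod\epsilon_{CC'}=-1$, for the $\psi^F$ factors around any closed loop multiply to $+1$. Let $o$ be a site with three distinct neighbors $a,b,c$. Since $2\leq M\leq N-2$, freeze $M-2$ spectator particles on sites disjoint from $\{o,a,b,c\}$; the six active configurations $\{o,a\},\{a,b\},\{o,b\},\{b,c\},\{o,c\},\{a,c\}$ (united with the spectators) form a closed hexagon in the configuration graph, each edge being a single-particle hop between $o$ and one of $a,b,c$. A short computation in the chosen ordering gives the product of the \emph{active} fermion signs as $-1$; the spectator contribution on each edge depends only on the pair of sites involved, and since $o$ is incident to all six edges while each of $a,b,c$ is incident to exactly two, every site enters an even number of times and the spectator signs cancel. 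The resulting $\prod\epsilon=-1$ contradicts $E_0^B=E_0^F$ and completes the proof.
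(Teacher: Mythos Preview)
Your argument is correct and follows essentially the same route as the paper: both use the trial state $|\psi^{\rm F}|$ together with the Perron--Frobenius structure of the bosonic matrix for the weak inequality, and both pinpoint the six-hop two-particle exchange around a branching site as the obstruction to equality. The only stylistic difference is that the paper compares a diagonal element of $(\calK)^6$ and then invokes a Perron--Frobenius corollary (Theorem~8.4.5 of Horn--Johnson), whereas you obtain the same contradiction by a direct saturation argument on the ground-state amplitudes, showing that equality would force $\mathrm{sign}[\psi^{\rm F}(C)\psi^{\rm F}(C')]=\epsilon_{CC'}$ on every hop, which is incompatible with the $-1$ holonomy around your hexagon.
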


\begin{proof}
Let us take the occupation number basis
$|\phi^a\rangle\equiv|\{n^a_j\}\rangle$,
where $\sum_j n^a_j = M$.
The number operator $n_j$ has the same matrix elements
in this basis, for hard-core bosons and spinless fermions.
It is convenient to introduce the operator
$\calK^{\rm B,F} \equiv - \calH^{\rm B,F} + C \mathds{1}$ with
a sufficiently large $C$ so that all the eigenvalues
and thus all the diagonal matrix elements $\calK^{\rm B,F}_{aa}$
are positive.
The matrix elements of each hopping term in
the bosonic operator $\calK^{\rm B}$ is non-negative, while
the corresponding matrix element for the fermionic operator
must have the same absolute value but could differ in sign.
Thus the matrix elements for bosonic and fermionic operators satisfy
\begin{equation}
\label{eq.Kab}
\calK_{ab}^{\rm B}=\left\{
\begin{array}{ll}
|\calK_{ab}^{\rm F}|&(a\ne b)\\
\calK_{aa}^{\rm F}& (a=b)
\end{array}
\right.
\end{equation}
The GS of the Hamiltonian $\calH^{\rm B,F}$ corresponds to the
eigenvector belonging to the largest eigenvalue $\kappamax^{\rm B,F}$
of $\calK^{\rm B,F}$.
Let $|\Psi_0\rangle_{\rm F}=\sum_a\psi_a|\phi^a\rangle_{\rm F}$
be the normalized GS for fermions.
The trial state for the bosons
$|\Psi_0\rangle_{\rm B}=\sum_a|\psi_a| |\phi^a\rangle_{\rm B}$,
where $|\phi^a\rangle_{\rm B}$ is the basis state for bosons
corresponding to $|\phi^a\rangle_{\rm F}$.
Then, by a variational argument, Eq.~\eqref{eq.Kab} implies
$\kappamax^{\rm B} \geq \kappamax^{\rm F}$, which is nothing but
the first part of Theorem 1.
As a simple corollary, the GS energies for
a given chemical potential $\mu$
also satisfy Eq.~\eqref{eq.e0b.leq.e0f}.

Let us now consider $\calL^{S} \equiv \left(\calK^S\right)^n$,
where $S={\rm B,F}$, for a positive integer $n$.
Its matrix elements in the
occupation number basis can be expanded as
\begin{equation}
\calL^{S}_{ab} = \sum_{c_1,\ldots,c_{n-1}}
\calK^{S}_{ac_1}\calK^{S}_{c_1c_2}\calK^{S}_{c_2c_3}\ldots
\calK^{S}_{c_{n-1}b} .
\label{eq.L_elem}
\end{equation}
Each term in the sum represents a process in which
a particle can hop to a connected site.
When the lattice $\Lambda$ is connected, any basis state
$|\phi^a \rangle_{\rm B}$ can be reached by a consecutive
application of the hopping term in $\calK^{\rm B}$, and thus
the matrix $\calK^{\rm B}_{ab}$ satisfies the connectivity.
Together with the property $\calK_{ab}^{\rm B} \geq 0$,
$\calK^{\rm B}_{ab}$ (and thus also $\calL^{\rm B}_{ab}$)
is a Perron-Frobenius matrix~\cite{horn2012matrix}.

\begin{figure}
\includegraphics[width=0.8\columnwidth]{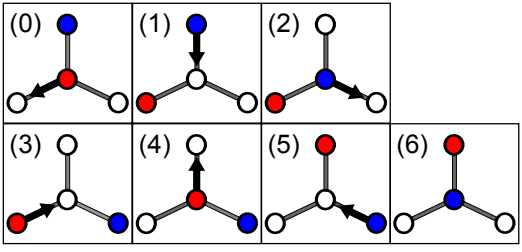}
\caption{An illustration of two-particle exchange process in six steps.}
\label{fig.particle-exchange}
\end{figure}

In contrast to the bosonic system, matrix elements
and thus the amplitude for each process can be
negative for the fermionic system.
In particular, we consider the following situation.
When the lattice has a ``branching'' site connected
to three or more sites and if $2 \leq M \leq N-2$, there is
an initial state $|\phi^a \rangle$ from which
one can exchange two particles and come back to
the initial state in $6$ hoppings
as in Fig.~\ref{fig.particle-exchange}.
Therefore, there is always a nonvanishing negative contribution
to the diagonal element for fermion $\calL^{\rm F}_{aa}$ at $n=6$, while
the corresponding contribution for bosons is positive.
This implies the strict inequality $\calL^{\rm B}_{aa} > \calL^{\rm F}_{aa}$ for
the particular diagonal element.
Applying a corollary of Perron-Frobenius
theorem~\footnote{See Theorem 8.4.5 of Ref.~\cite{horn2012matrix}.}
we find $\kappamax^{\rm B} > \kappamax^{\rm F}$ and hence
the latter part of the theorem follows.
\end{proof}

The sign of hopping amplitudes in a system of bosons may be
related to \emph{frustration}, by mapping
the system of hard-core bosons to a quantum spin system
with $S=1/2$.
Positive hopping amplitudes $t_{jk}$ correspond to
ferromagnetic XY interaction; the corresponding spin system
is an unfrustrated ferromagnet, if all the amplitudes $t_{jk}$
are non-negative.
Theorem~\ref{thm.natural} means that,
if there is no frustration among the hoppings in this sense,
hard-core bosons always have lower energy than the corresponding fermions.
In fact, we can understand this result as an effect of frustration
induced by fermionic statistics, in the following sense.

The many-body problem defined by $\calH^{\rm B,F}$ can be mapped to
a single-particle tight-binding problem if we identify the
many-body basis state $|\phi^a\rangle$ with a site $a$
in a fictitious lattice.
For the boson problem, all the hopping amplitudes in this
single-particle problem are again non-negative.
These are non-frustrating since there is a
constructive interference among all the paths.
In this picture, the fermionic statistics has an effect
of introducing the phase in the hopping.
In particular, when the two-particle exchange can occur in
the original many-body problem, there is a loop in the
fictitious lattice which contains $\pi$ flux.
This could be interpreted as a frustration, since
there is a destructive interference among different
paths. Indeed this is the key property exploited in our proof.
By the mapping to the single-particle problem on the fictitious
lattice, Theorem~\ref{thm.natural} may be regarded as
a particular case of the diamagnetic
inequality on the lattice
(See the paper~\cite{diamagnetic_inequality} and references therein).
In fact, we have also proved a strict version of the
lattice diamagnetic inequality, which has not been
discussed previously to our knowledge.
We emphasize that, this picture does not rely on the
assumption of a perfect Bose-Einstein condensation
and thus its applicability is not limited to
noninteracting systems of particles.

Let us now discuss how the natural
inequality~\eqref{eq.e0b.leq.e0f} can be violated.
According to Theorem~\ref{thm.natural},
in order to realize the violation,
it is necessary to introduce a frustration
by setting some of the amplitudes $t_{jk}$ negative or complex.
While the presence of frustration is not a sufficient condition,
we will demonstrate that the violation indeed occurs in several
concrete examples.
Intuitively, this means that
we can cancel the effect of
the statistical phases by that of hopping amplitudes,
so that the fermions have lower energy than the corresponding bosons.
In the following, for simplicity, we only discuss
tight-binding models of hard-core bosons and
corresponding noninteracting fermions, setting $V_{jk}=0$.

\textit{Particles on a Ring. ---}
We begin with a simple but instructive example in one dimension:
tight-binding model on a ring
$\calH=-\sum_{j=1}^N
(c_{j}^{\dag}c_{j+1}+\textrm{H.c.})$.
The hard-core boson version of this model,
which is equivalent to the $S=1/2$ XY chain,
can be mapped to the model of non-interacting fermions on the ring
by the Jordan-Wigner transformation~\cite{LSM, S_Katsura}.
Thus the hard-core bosons and fermions are almost equivalent
in this case.

However, a care should be taken on the boundary
condition when we discuss the ring of finite length.
For simplicity,
% here
we assume
%that
the number of sites $N$ is
an integral multiple of $4$, and the number of particles $M=N/2$
(an even integer by assumption).
Then the hard-core bosons with the periodic (antiperiodic)
boundary condition are mapped to noninteracting fermions
with the antiperiodic (periodic, respectively)
boundary condition.
Noninteracting fermions on a ring have
a GS energy density lower by $O(1/N^2)$ for
the antiperiodic boundary condition,
compared to the periodic boundary
condition~\cite{Ginsparg-Applied-CFT,Alcaraz-CFT}.
This implies that the hard-core bosons have lower energy than fermions
on a ring with the periodic boundary condition, conforming to
Theorem~\ref{thm.natural}
since all the hopping amplitudes are non-negative.
On the other hand, the same result implies that,
under the antiperiodic boundary condition,
the hard-core bosons have higher energy than fermions.
Imposing the antiperiodic boundary condition is equivalent to
introducing a $\pi$-flux inside the ring, which
can cancel the effect of the statistical phase so that the
inequality~\eqref{eq.e0b.leq.e0f} is indeed inverted.
The energy difference on the ring, however, vanishes asymptotically
in the thermodynamic limit $N\to\infty$.
Thus, we shall seek for different examples where the
hard-core bosons have higher energy than fermions in the
thermodynamic limit.

\textit{Two-Dimensional System with Flux. ---}
A natural system to consider would be a two-dimensional lattice
with flux.
Under the periodic boundary condition,
the total flux is quantized to integer numbers of flux quanta
(the unit flux quantum $\Phi_0=hc/e$ is $2\pi$ in our unit).
%For the finite square lattice with $N$ plaquettes,
%the flux per plaquette $\Phi$ is quantized in unit of $2\pi/N$.
Such a uniform flux can be represented using the
string gauge~\cite{Hatsugai-stringgauge}.
We obtained the GS energy of hard-core bosons
and fermions with various densities of particles and
various values of flux using exact numerical diagonalization, for
square lattices up to 26 sites with periodic boundary conditions.
The result for the $\sqrt{26} \times \sqrt{26}$ square lattice
is shown in Fig.~\ref{fig.gs.sq26}.
\begin{figure}
\includegraphics[width=0.65\columnwidth]{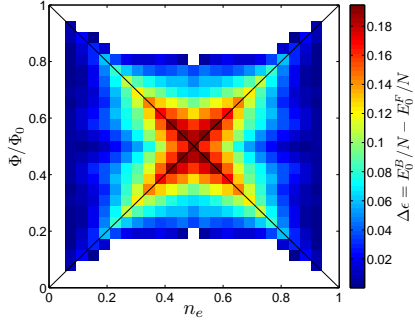}
\caption{Difference of GS energy density $\Delta\epsilon$ between
hard-core bosons and fermions on
the $\sqrt{26}\times\sqrt{26}$ square lattice with $\Phi$
flux per plaquette and $n_e$ particle per site.
The natural inequality~\eqref{eq.e0b.leq.e0f} holds in
white region, while its violation is color coded.
%The natural inequality~\eqref{eq.e0b.leq.e0f} holds in
%negative region, while its violation occurs in positive region.
Statistical transmutation is expected along the
two solid diagonal lines.
% $\Phi=n_e$ and $\Phi=1-n_e$.
}
\label{fig.gs.sq26}
\end{figure}

We find that the ``natural'' inequality~\eqref{eq.e0b.leq.e0f}
is violated in a region of the phase diagram.
In particular, the inversion is significant along the
diagonal lines $\Phi/\Phi_0=n_e$ and $\Phi/\Phi_0=1-n_e$.
These lines are precisely where the statistical transmutation
between the hard-core boson and the fermion is expected
to occur~\cite{Semenoff,Fradkin}.
Namely, in the mean-field level, one flux quantum can be attached
to each particle, transforming fermions into bosons and vice versa,
at the same time eliminating the background field.
At zero field, the frustration is absent and hard-core
bosons have lower energy than fermions.
Thus, the statistical transmutation implies that, hard-core bosons
have higher energy than fermions on two diagonal lines.
While this argument is not rigorous and the actual physics
is presumably more involved~\cite{Cooper},
our numerical result
supports the statistical transmutation scenario.
(For a related discussion for spinful electrons, see Ref.~\cite{Saiga}.)
Numerical results for the square lattices of various sizes up to
$26$ sites (not shown) suggest that the energy difference is
nonvanishing in the thermodynamic limit.

In fact, in the following, we will prove rigorously
in the thermodynamic limit that the fermions have
lower energy at half filling with $\Phi=\pi$ flux per plaquette,
as suggested by our numerical calculation and the statistical
transmutation argument.

\textit{Anderson's argument. ---}
\begin{figure}
\includegraphics[width=0.8\columnwidth]{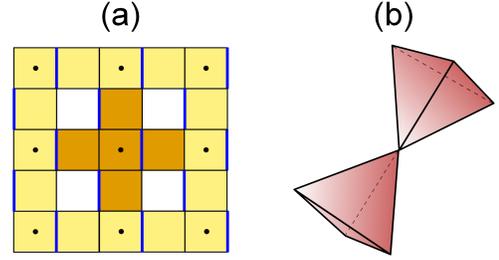}
\caption{
(a) The square lattice with $\pi$ flux in each plaquette.
We choose the gauge so that the hopping amplitude $t_{jk}$
is $+1$ on black links, and $-1$ on blue ones.
The brown cross represents a cluster of $12$ sites.
The whole lattice is covered by clusters, whose centers are denoted by black dots.
(b) A dimer of two tetrahedra made up of 7 sites
in the pyrochlore lattice.
}
\label{fig.piflux}
\end{figure}
Let us discuss the square lattice with $\pi$-flux per plaquette.
The Hamiltonian reads
\begin{equation}
\mathcal H=-\sum_{<j,k>}(t_{jk}c_j^{\dag}c_k+\textrm{H.c.}),
\end{equation}
where $t_{jk}=\pm1$ as specified in Fig.~\ref{fig.piflux}(a).
We note that Lieb has shown that $\pi$-flux minimizes
the GS energy of fermions
at half-filling
on the square lattice~\cite{Lieb-flux}.
On the other hand, an argument similar to the Proof of
Theorem~\ref{thm.natural}
can be used to prove a lattice version of Simon's theorem
on diamagnetism of bosons~\cite{Barry_Simon_diamag}.
Namely, for bosons, introduction of flux always increases
the GS energy.
These, together with the statistical transmutation argument,
%discussed earlier,
suggest a possibility of violation of
Eq.~\eqref{eq.e0b.leq.e0f} with $\pi$-flux per plaquette.

For technical convenience,
we restrict ourselves to the case of
%hereafter we consider
the ``grand canonical ensemble'' GS at $\mu=0$.
For $\pi$-flux square lattice, it turns
out to be equivalent to finding the GS at
%to consider the
half filling ($1/2$ particle per site).
From the exact dispersion relation,
the GS energy of the fermionic model at $\mu=0$
is obtained
exactly as $E_0^{\rm F}\sim-0.958091N$
in the limit of large $N$.
Now we turn to the ``grand canonical'' GS energy,
of the corresponding boson model at the same chemical potential ($\mu=0$).
Here we use Anderson's argument~\cite{Anderson,Tarrach-lower-bound,Valenti-lower-bound}
by writing the Hamiltonian as
\begin{equation}
\calH =\sum_\alpha h_\alpha,
\label{eq.decomp.Anderson}
\end{equation}
where
\begin{equation}
h_\alpha=-\frac{1}{2}\sum_{\langle j,k\rangle\in\boldsymbol{+}_\alpha}
( t_{jk}c_j^{\dag}c_k+\textrm{H.c.}).
\label{eq.cross.cluster}
\end{equation}
Here $\boldsymbol{+}_\alpha$ refers to a cross-shaped cluster
of $12$ sites as shown in Fig.~\ref{fig.piflux}(a).
We consider all the clusters
with the same pattern of hopping amplitudes within the cluster,
in the square lattice.
As a consequence, each cluster as shown in Fig.~\ref{fig.piflux}(a),
overlaps with 4 other clusters and
each link appears in two different clusters
when periodic boundary conditions are imposed.
The factor $1/2$
in Eq.~\eqref{eq.cross.cluster} compensates this double counting.

The GS energy $E_0$ of $\calH$ satisfies
$ E_0\geq\sum_\alpha\epsilon_0^\alpha$,
where $\epsilon_0^\alpha$ is the GS energy of $h_\alpha$.
The grand canonical GS energy of the cross-shaped cluster is obtained
by exact diagonalization as $\epsilon_0^\alpha = - 3.609035$.
Since there are $N/4$ such clusters in the square lattice of $N$ sites,
we obtain
\begin{equation}
E_0^{\rm B}/N\geq-3.609035/4=-0.902259>E_0^{\rm F}/N .
\end{equation}
Thus the inversion of the GS energies for
the $\pi$-flux square lattice model with $\mu=0$,
as expected from the statistical transmutation argument
discussed earlier,
is now proved rigorously.

This argument is not restricted to two-dimensional systems.
Let us consider the standard tight-binding model on
the three-dimensional pyrochlore lattice:
$\calH =\sum_{\langle j,k\rangle}(c^{\dagger}_jc_k+\textrm{H.c.})$,
which has frustrated hoppings with this choice of the sign.
%where $\langle j, k\rangle$ runs over all pairs of nearest-neighbor
%sites in the three-dimensional pyrochlore lattice.
Again we set the chemical potential $\mu=0$.
%We note that this model has frustrated hoppings with this choice
%of the sign.
The model in the single-particle
sector has two degenerate flat bands at the energy %${\cal E}=-2$
$\epsilon=-2$
and two dispersive bands touching the flat bands~\cite{Bergman}.
Thus for fermions, the GS energy at $\mu=0$ satisfies
$E_0^{\rm F}<-2(N/2)=-N$, where $N$ is the number of sites
of the lattice.
We note that, because of the lack of the particle-hole symmetry,
$\mu=0$ does not imply half-filling for this model.
The hard-core boson version of this model can be decomposed
as Eq.~\eqref{eq.decomp.Anderson} with
$h_\alpha=(1/4)\sum_{\langle j, k\rangle\in\mbox{\scriptsize TD}_\alpha}
(c^\dagger_jc_k+\textrm{H.c.})$,
where TD$_\alpha$ refers to each dimer of
elementary tetrahedra of the pyrochlore lattice sharing
a vertex (site) (see Fig.\ref{fig.piflux}(b)).
Here we count dimers in any direction; each tetrahedron
(and thus each link) belongs to $4$ dimers.
The factor $1/4$ in the definition of $h_\alpha$ is introduced
to compensate the overcounting.
The GS energy of a tetrahedra dimer is
obtained by exact diagonalization as
$\epsilon_0^\alpha=-(2+\sqrt{2})/4=-0.853554$.
Since there are $N$ dimers of tetrahedra,
the GS energy of bosons at $\mu=0$ satisfies
%\begin{equation}
$E_0^{\rm B}/N\geq-(2+\sqrt{2})/4>E_0^{\rm F}/N$.
%\end{equation}
Thus we have proved the violation of Eq.~\eqref{eq.e0b.leq.e0f}
for the simple tight-binding model on the three-dimensional
pyrochlore lattice.

% \textit{Flat Band. ---}
The above example of the pyrochlore lattice exhibits a flat
band as the lowest energy band.
While the existence of a flat band is not necessary
to violate Eq.~\eqref{eq.e0b.leq.e0f}, it does tend to help:
as long as all the fermions occupy the lowest flat band,
Pauli exclusion principle plays no role in increasing
the GS energy.
Thus, such flat band models would have a better chance to realize
the inversion of the GS energies.
In fact, we can show that the inequality~\eqref{eq.e0b.leq.e0f}
is indeed violated in a few models with
a lowest flat band in a range of filling fraction,
using a cluster decomposition technique~\cite{prep}.
They include the delta-chain model, for which
the violation of Eq.~\eqref{eq.e0b.leq.e0f} was numerically found for
small clusters~\cite{HuberAltman2010,Altman-priv},
and the kagome lattice model.

\textit{Conclusions. ---}
We have investigated the fundamental question
whether the GS energy of hard-core bosons is
lower than that of fermions on the same lattice.
We have proved that
the former is indeed lower than the latter, as naturally expected,
if there is no frustration in the hopping.
The statistical phase of fermions
induces a magnetic flux in an effective description
in terms of a single-particle problem on a fictitious lattice.
It results in a frustration in the sense of
destructive quantum interferences among different paths.
This also provides a new understanding why the bosons
have lower energy than fermions, when there is no frustration
in the hopping.

On the other hand, the inequality can be reversed in the presence of
frustration, and we have demonstrated that it is actually the case
in several concrete examples.
There is a close connection of the present problem
to many apparently unrelated concepts in quantum many-body physics,
including diamagnetic inequality,
Simon's universal diamagnetism of bosons,
Lieb's optimal flux for fermions, statistical transmutation,
and flat band.
More details of our analysis will be presented in a
separate publication~\cite{prep}.

\medskip

We thank Ehud Altman for stimulating discussion which motivated
us to study this problem.
We also thank Nigel Cooper and Tarun Grover for useful discussions.
This work is supported in part by
Grant-in-Aid for Scientific Research on Innovative Areas
No. 20102008 from MEXT, Japan, and
by US National Science Foundation Grant No. NSF PHY11-25915
during a visit of W.~N. and M.~O.
to Kavli Institute for Theoretical Physics, UC Santa Barbara.
H.~K. was supported in part by Grand-in-Aid
for Young Scientists (B) (Grant No. 23740298).
W.~N. is supported by MEXT scholarship.
Numerical calculations in this work were partially
carried out using TITPACK ver.2 by H. Nishimori, and
codes provided by ALPS project~\cite{ALPS-web,ALPS-ref1,ALPS-diag}.

\bibliography{biblio}

%merlin.mbs apsrev4-1.bst 2010-07-25 4.21a (PWD, AO, DPC) hacked
%Control: key (0)
%Control: author (8) initials jnrlst
%Control: editor formatted (1) identically to author
%Control: production of article title (-1) disabled
%Control: page (0) single
%Control: year (1) truncated
%Control: production of eprint (0) enabled
\begin{thebibliography}{27}%
\makeatletter
\providecommand \@ifxundefined [1]{%
 \@ifx{#1\undefined}
}%
\providecommand \@ifnum [1]{%
 \ifnum #1\expandafter \@firstoftwo
 \else \expandafter \@secondoftwo
 \fi
}%
\providecommand \@ifx [1]{%
 \ifx #1\expandafter \@firstoftwo
 \else \expandafter \@secondoftwo
 \fi
}%
\providecommand \natexlab [1]{#1}%
\providecommand \enquote  [1]{``#1''}%
\providecommand \bibnamefont  [1]{#1}%
\providecommand \bibfnamefont [1]{#1}%
\providecommand \citenamefont [1]{#1}%
\providecommand \href@noop [0]{\@secondoftwo}%
\providecommand \href [0]{\begingroup \@sanitize@url \@href}%
\providecommand \@href[1]{\@@startlink{#1}\@@href}%
\providecommand \@@href[1]{\endgroup#1\@@endlink}%
\providecommand \@sanitize@url [0]{\catcode `\\12\catcode `\$12\catcode
  `\&12\catcode `\#12\catcode `\^12\catcode `\_12\catcode `\%12\relax}%
\providecommand \@@startlink[1]{}%
\providecommand \@@endlink[0]{}%
\providecommand \url  [0]{\begingroup\@sanitize@url \@url }%
\providecommand \@url [1]{\endgroup\@href {#1}{\urlprefix }}%
\providecommand \urlprefix  [0]{URL }%
\providecommand \Eprint [0]{\href }%
\providecommand \doibase [0]{http://dx.doi.org/}%
\providecommand \selectlanguage [0]{\@gobble}%
\providecommand \bibinfo  [0]{\@secondoftwo}%
\providecommand \bibfield  [0]{\@secondoftwo}%
\providecommand \translation [1]{[#1]}%
\providecommand \BibitemOpen [0]{}%
\providecommand \bibitemStop [0]{}%
\providecommand \bibitemNoStop [0]{.\EOS\space}%
\providecommand \EOS [0]{\spacefactor3000\relax}%
\providecommand \BibitemShut  [1]{\csname bibitem#1\endcsname}%
\let\auto@bib@innerbib\@empty
%</preamble>
\bibitem [{\citenamefont {Huber}\ and\ \citenamefont
  {Altman}(2010)}]{HuberAltman2010}%
  \BibitemOpen
  \bibfield  {author} {\bibinfo {author} {\bibfnamefont {S.~D.}\ \bibnamefont
  {Huber}}\ and\ \bibinfo {author} {\bibfnamefont {E.}~\bibnamefont {Altman}},\
  }\href {\doibase 10.1103/PhysRevB.82.184502} {\bibfield  {journal} {\bibinfo
  {journal} {Phys. Rev. B}\ }\textbf {\bibinfo {volume} {82}},\ \bibinfo
  {pages} {184502} (\bibinfo {year} {2010})}\BibitemShut {NoStop}%
\bibitem [{\citenamefont {Altman}(2010)}]{Altman-priv}%
  \BibitemOpen
  \bibfield  {author} {\bibinfo {author} {\bibfnamefont {E.}~\bibnamefont
  {Altman}},\ }\href@noop {} {} (\bibinfo {year} {2010}),\ \bibinfo {note}
  {private communications}\BibitemShut {NoStop}%
\bibitem [{\citenamefont {Zhang}\ and\ \citenamefont {Henley}(2003)}]{Henley1}%
  \BibitemOpen
  \bibfield  {author} {\bibinfo {author} {\bibfnamefont {N.~G.}\ \bibnamefont
  {Zhang}}\ and\ \bibinfo {author} {\bibfnamefont {C.~L.}\ \bibnamefont
  {Henley}},\ }\href {\doibase 10.1103/PhysRevB.68.014506} {\bibfield
  {journal} {\bibinfo  {journal} {Phys. Rev. B}\ }\textbf {\bibinfo {volume}
  {68}},\ \bibinfo {pages} {014506} (\bibinfo {year} {2003})}\BibitemShut
  {NoStop}%
\bibitem [{\citenamefont {Cheong}\ and\ \citenamefont
  {Henley}(2009)}]{Henley2}%
  \BibitemOpen
  \bibfield  {author} {\bibinfo {author} {\bibfnamefont {S.-A.}\ \bibnamefont
  {Cheong}}\ and\ \bibinfo {author} {\bibfnamefont {C.~L.}\ \bibnamefont
  {Henley}},\ }\href {\doibase 10.1103/PhysRevB.80.165124} {\bibfield
  {journal} {\bibinfo  {journal} {Phys. Rev. B}\ }\textbf {\bibinfo {volume}
  {80}},\ \bibinfo {pages} {165124} (\bibinfo {year} {2009})}\BibitemShut
  {NoStop}%
\bibitem [{\citenamefont {Becca}\ \emph {et~al.}(2000)\citenamefont {Becca},
  \citenamefont {Capriotti}, \citenamefont {Sorella},\ and\ \citenamefont
  {Parola}}]{Becca_proofbyPF}%
  \BibitemOpen
  \bibfield  {author} {\bibinfo {author} {\bibfnamefont {F.}~\bibnamefont
  {Becca}}, \bibinfo {author} {\bibfnamefont {L.}~\bibnamefont {Capriotti}},
  \bibinfo {author} {\bibfnamefont {S.}~\bibnamefont {Sorella}}, \ and\
  \bibinfo {author} {\bibfnamefont {A.}~\bibnamefont {Parola}},\ }\href
  {\doibase 10.1103/PhysRevB.62.15277} {\bibfield  {journal} {\bibinfo
  {journal} {Phys. Rev. B}\ }\textbf {\bibinfo {volume} {62}},\ \bibinfo
  {pages} {15277} (\bibinfo {year} {2000})}\BibitemShut {NoStop}%
\bibitem [{\citenamefont {Horn}\ and\ \citenamefont
  {Johnson}(2012)}]{horn2012matrix}%
  \BibitemOpen
  \bibfield  {author} {\bibinfo {author} {\bibfnamefont {R.}~\bibnamefont
  {Horn}}\ and\ \bibinfo {author} {\bibfnamefont {C.}~\bibnamefont {Johnson}},\
  }\href {http://books.google.co.jp/books?id=mlQ-PgAACAAJ} {\emph {\bibinfo
  {title} {Matrix Analysis}}}\ (\bibinfo  {publisher} {Cambridge University
  Press},\ \bibinfo {year} {2012})\BibitemShut {NoStop}%
\bibitem [{Note1()}]{Note1}%
  \BibitemOpen
  \bibinfo {note} {See Theorem 8.4.5 of Ref.~\cite
  {horn2012matrix}.}\BibitemShut {Stop}%
\bibitem [{\citenamefont {Lieb}\ and\ \citenamefont
  {Loss}(1993)}]{diamagnetic_inequality}%
  \BibitemOpen
  \bibfield  {author} {\bibinfo {author} {\bibfnamefont {E.~H.}\ \bibnamefont
  {Lieb}}\ and\ \bibinfo {author} {\bibfnamefont {M.}~\bibnamefont {Loss}},\
  }\href {\doibase 10.1215/S0012-7094-93-07114-1} {\bibfield  {journal}
  {\bibinfo  {journal} {Duke Math. J.}\ }\textbf {\bibinfo {volume} {71}},\
  \bibinfo {pages} {337} (\bibinfo {year} {1993})}\BibitemShut {NoStop}%
\bibitem [{\citenamefont {Lieb}\ \emph {et~al.}(1961)\citenamefont {Lieb},
  \citenamefont {Schultz},\ and\ \citenamefont {Mattis}}]{LSM}%
  \BibitemOpen
  \bibfield  {author} {\bibinfo {author} {\bibfnamefont {E.~H.}\ \bibnamefont
  {Lieb}}, \bibinfo {author} {\bibfnamefont {T.}~\bibnamefont {Schultz}}, \
  and\ \bibinfo {author} {\bibfnamefont {D.~J.}\ \bibnamefont {Mattis}},\
  }\href@noop {} {\bibfield  {journal} {\bibinfo  {journal} {Ann. Phys. (N.
  Y.)}\ }\textbf {\bibinfo {volume} {16}},\ \bibinfo {pages} {407} (\bibinfo
  {year} {1961})}\BibitemShut {NoStop}%
\bibitem [{\citenamefont {Katsura}(1962)}]{S_Katsura}%
  \BibitemOpen
  \bibfield  {author} {\bibinfo {author} {\bibfnamefont {S.}~\bibnamefont
  {Katsura}},\ }\href {\doibase 10.1103/PhysRev.127.1508} {\bibfield  {journal}
  {\bibinfo  {journal} {Phys. Rev.}\ }\textbf {\bibinfo {volume} {127}},\
  \bibinfo {pages} {1508} (\bibinfo {year} {1962})}\BibitemShut {NoStop}%
\bibitem [{\citenamefont {Ginsparg}(1989)}]{Ginsparg-Applied-CFT}%
  \BibitemOpen
  \bibfield  {author} {\bibinfo {author} {\bibfnamefont {P.}~\bibnamefont
  {Ginsparg}},\ }in\ \href@noop {} {\emph {\bibinfo {booktitle} {Fields,
  Strings and Critical Phenomena}}}\ (\bibinfo  {publisher} {North-Holland},\
  \bibinfo {year} {1989})\BibitemShut {NoStop}%
\bibitem [{\citenamefont {Alcaraz}\ \emph {et~al.}(1987)\citenamefont
  {Alcaraz}, \citenamefont {Barber},\ and\ \citenamefont
  {Batchelor}}]{Alcaraz-CFT}%
  \BibitemOpen
  \bibfield  {author} {\bibinfo {author} {\bibfnamefont {F.~C.}\ \bibnamefont
  {Alcaraz}}, \bibinfo {author} {\bibfnamefont {M.~N.}\ \bibnamefont {Barber}},
  \ and\ \bibinfo {author} {\bibfnamefont {M.~T.}\ \bibnamefont {Batchelor}},\
  }\href {\doibase 10.1103/PhysRevLett.58.771} {\bibfield  {journal} {\bibinfo
  {journal} {Phys. Rev. Lett.}\ }\textbf {\bibinfo {volume} {58}},\ \bibinfo
  {pages} {771} (\bibinfo {year} {1987})}\BibitemShut {NoStop}%
\bibitem [{\citenamefont {Hatsugai}\ \emph {et~al.}(1999)\citenamefont
  {Hatsugai}, \citenamefont {Ishibashi},\ and\ \citenamefont
  {Morita}}]{Hatsugai-stringgauge}%
  \BibitemOpen
  \bibfield  {author} {\bibinfo {author} {\bibfnamefont {Y.}~\bibnamefont
  {Hatsugai}}, \bibinfo {author} {\bibfnamefont {K.}~\bibnamefont {Ishibashi}},
  \ and\ \bibinfo {author} {\bibfnamefont {Y.}~\bibnamefont {Morita}},\ }\href
  {\doibase 10.1103/PhysRevLett.83.2246} {\bibfield  {journal} {\bibinfo
  {journal} {Phys. Rev. Lett.}\ }\textbf {\bibinfo {volume} {83}},\ \bibinfo
  {pages} {2246} (\bibinfo {year} {1999})}\BibitemShut {NoStop}%
\bibitem [{\citenamefont {Semenoff}(1988)}]{Semenoff}%
  \BibitemOpen
  \bibfield  {author} {\bibinfo {author} {\bibfnamefont {G.~W.}\ \bibnamefont
  {Semenoff}},\ }\href {\doibase 10.1103/PhysRevLett.61.517} {\bibfield
  {journal} {\bibinfo  {journal} {Phys. Rev. Lett.}\ }\textbf {\bibinfo
  {volume} {61}},\ \bibinfo {pages} {517} (\bibinfo {year} {1988})}\BibitemShut
  {NoStop}%
\bibitem [{\citenamefont {Fradkin}(1989)}]{Fradkin}%
  \BibitemOpen
  \bibfield  {author} {\bibinfo {author} {\bibfnamefont {E.}~\bibnamefont
  {Fradkin}},\ }\href {\doibase 10.1103/PhysRevLett.63.322} {\bibfield
  {journal} {\bibinfo  {journal} {Phys. Rev. Lett.}\ }\textbf {\bibinfo
  {volume} {63}},\ \bibinfo {pages} {322} (\bibinfo {year} {1989})}\BibitemShut
  {NoStop}%
\bibitem [{\citenamefont {M\"oller}\ and\ \citenamefont
  {Cooper}(2010)}]{Cooper}%
  \BibitemOpen
  \bibfield  {author} {\bibinfo {author} {\bibfnamefont {G.}~\bibnamefont
  {M\"oller}}\ and\ \bibinfo {author} {\bibfnamefont {N.~R.}\ \bibnamefont
  {Cooper}},\ }\href {\doibase 10.1103/PhysRevA.82.063625} {\bibfield
  {journal} {\bibinfo  {journal} {Phys. Rev. A}\ }\textbf {\bibinfo {volume}
  {82}},\ \bibinfo {pages} {063625} (\bibinfo {year} {2010})}\BibitemShut
  {NoStop}%
\bibitem [{\citenamefont {Saiga}\ and\ \citenamefont {Oshikawa}(2006)}]{Saiga}%
  \BibitemOpen
  \bibfield  {author} {\bibinfo {author} {\bibfnamefont {Y.}~\bibnamefont
  {Saiga}}\ and\ \bibinfo {author} {\bibfnamefont {M.}~\bibnamefont
  {Oshikawa}},\ }\href {\doibase 10.1103/PhysRevLett.96.036406} {\bibfield
  {journal} {\bibinfo  {journal} {Phys. Rev. Lett.}\ }\textbf {\bibinfo
  {volume} {96}},\ \bibinfo {pages} {036406} (\bibinfo {year}
  {2006})}\BibitemShut {NoStop}%
\bibitem [{\citenamefont {Lieb}(1994)}]{Lieb-flux}%
  \BibitemOpen
  \bibfield  {author} {\bibinfo {author} {\bibfnamefont {E.~H.}\ \bibnamefont
  {Lieb}},\ }\href {\doibase 10.1103/PhysRevLett.73.2158} {\bibfield  {journal}
  {\bibinfo  {journal} {Phys. Rev. Lett.}\ }\textbf {\bibinfo {volume} {73}},\
  \bibinfo {pages} {2158} (\bibinfo {year} {1994})}\BibitemShut {NoStop}%
\bibitem [{\citenamefont {Simon}(1976)}]{Barry_Simon_diamag}%
  \BibitemOpen
  \bibfield  {author} {\bibinfo {author} {\bibfnamefont {B.}~\bibnamefont
  {Simon}},\ }\href {\doibase 10.1103/PhysRevLett.36.1083} {\bibfield
  {journal} {\bibinfo  {journal} {Phys. Rev. Lett.}\ }\textbf {\bibinfo
  {volume} {36}},\ \bibinfo {pages} {1083} (\bibinfo {year}
  {1976})}\BibitemShut {NoStop}%
\bibitem [{\citenamefont {Anderson}(1951)}]{Anderson}%
  \BibitemOpen
  \bibfield  {author} {\bibinfo {author} {\bibfnamefont {P.~W.}\ \bibnamefont
  {Anderson}},\ }\href {\doibase 10.1103/PhysRev.83.1260} {\bibfield  {journal}
  {\bibinfo  {journal} {Phys. Rev.}\ }\textbf {\bibinfo {volume} {83}},\
  \bibinfo {pages} {1260} (\bibinfo {year} {1951})}\BibitemShut {NoStop}%
\bibitem [{\citenamefont {Tarrach}\ and\ \citenamefont
  {Valent\'{i}}(1990)}]{Tarrach-lower-bound}%
  \BibitemOpen
  \bibfield  {author} {\bibinfo {author} {\bibfnamefont {R.}~\bibnamefont
  {Tarrach}}\ and\ \bibinfo {author} {\bibfnamefont {R.}~\bibnamefont
  {Valent\'{i}}},\ }\href {\doibase 10.1103/PhysRevB.41.9611} {\bibfield
  {journal} {\bibinfo  {journal} {Phys. Rev. B}\ }\textbf {\bibinfo {volume}
  {41}},\ \bibinfo {pages} {9611} (\bibinfo {year} {1990})}\BibitemShut
  {NoStop}%
\bibitem [{\citenamefont {Valent\'{i}}\ \emph {et~al.}(1991)\citenamefont
  {Valent\'{i}}, \citenamefont {Hirschfeld},\ and\ \citenamefont
  {Angl\'es~d'Auriac}}]{Valenti-lower-bound}%
  \BibitemOpen
  \bibfield  {author} {\bibinfo {author} {\bibfnamefont {R.}~\bibnamefont
  {Valent\'{i}}}, \bibinfo {author} {\bibfnamefont {P.~J.}\ \bibnamefont
  {Hirschfeld}}, \ and\ \bibinfo {author} {\bibfnamefont {J.~C.}\ \bibnamefont
  {Angl\'es~d'Auriac}},\ }\href {\doibase 10.1103/PhysRevB.44.3995} {\bibfield
  {journal} {\bibinfo  {journal} {Phys. Rev. B}\ }\textbf {\bibinfo {volume}
  {44}},\ \bibinfo {pages} {3995} (\bibinfo {year} {1991})}\BibitemShut
  {NoStop}%
\bibitem [{\citenamefont {Bergman}\ \emph {et~al.}(2008)\citenamefont
  {Bergman}, \citenamefont {Wu},\ and\ \citenamefont {Balents}}]{Bergman}%
  \BibitemOpen
  \bibfield  {author} {\bibinfo {author} {\bibfnamefont {D.~L.}\ \bibnamefont
  {Bergman}}, \bibinfo {author} {\bibfnamefont {C.}~\bibnamefont {Wu}}, \ and\
  \bibinfo {author} {\bibfnamefont {L.}~\bibnamefont {Balents}},\ }\href
  {\doibase 10.1103/PhysRevB.78.125104} {\bibfield  {journal} {\bibinfo
  {journal} {Phys. Rev. B}\ }\textbf {\bibinfo {volume} {78}},\ \bibinfo
  {pages} {125104} (\bibinfo {year} {2008})}\BibitemShut {NoStop}%
\bibitem [{\citenamefont {Nie}\ \emph {et~al.}()\citenamefont {Nie},
  \citenamefont {Katsura},\ and\ \citenamefont {Oshikawa}}]{prep}%
  \BibitemOpen
  \bibfield  {author} {\bibinfo {author} {\bibfnamefont {W.}~\bibnamefont
  {Nie}}, \bibinfo {author} {\bibfnamefont {H.}~\bibnamefont {Katsura}}, \ and\
  \bibinfo {author} {\bibfnamefont {M.}~\bibnamefont {Oshikawa}},\ }\href@noop
  {} {}\bibinfo {note} {In preparation}\BibitemShut {NoStop}%
\bibitem [{ALP()}]{ALPS-web}%
  \BibitemOpen
  \href@noop {} {}\bibinfo {note} {ALPS Project {\tt
  http://alps.comp-phys.org/}}\BibitemShut {NoStop}%
\bibitem [{\citenamefont {Albuquerque}\ \emph {et~al.}(2007)\citenamefont
  {Albuquerque}, \citenamefont {Alet}, \citenamefont {Corboz}, \citenamefont
  {Dayal}, \citenamefont {Feiguin}, \citenamefont {Fuchs}, \citenamefont
  {Gamper}, \citenamefont {Gull}, \citenamefont {G\"{u}rtler}, \citenamefont
  {Honecker}, \citenamefont {Igarashi}, \citenamefont {K\"{o}rner},
  \citenamefont {Kozhevnikov}, \citenamefont {L\"{a}uchli}, \citenamefont
  {Manmana}, \citenamefont {Matsumoto}, \citenamefont {McCulloch},
  \citenamefont {Michel}, \citenamefont {Noack}, \citenamefont {owski},
  \citenamefont {Pollet}, \citenamefont {Pruschke}, \citenamefont
  {Schollw\"{o}ck}, \citenamefont {Todo}, \citenamefont {Trebst}, \citenamefont
  {Troyer}, \citenamefont {Werner},\ and\ \citenamefont {Wessel}}]{ALPS-ref1}%
  \BibitemOpen
  \bibfield  {author} {\bibinfo {author} {\bibfnamefont {A.}~\bibnamefont
  {Albuquerque}}, \bibinfo {author} {\bibfnamefont {F.}~\bibnamefont {Alet}},
  \bibinfo {author} {\bibfnamefont {P.}~\bibnamefont {Corboz}}, \bibinfo
  {author} {\bibfnamefont {P.}~\bibnamefont {Dayal}}, \bibinfo {author}
  {\bibfnamefont {A.}~\bibnamefont {Feiguin}}, \bibinfo {author} {\bibfnamefont
  {S.}~\bibnamefont {Fuchs}}, \bibinfo {author} {\bibfnamefont
  {L.}~\bibnamefont {Gamper}}, \bibinfo {author} {\bibfnamefont
  {E.}~\bibnamefont {Gull}}, \bibinfo {author} {\bibfnamefont {S.}~\bibnamefont
  {G\"{u}rtler}}, \bibinfo {author} {\bibfnamefont {A.}~\bibnamefont
  {Honecker}}, \bibinfo {author} {\bibfnamefont {R.}~\bibnamefont {Igarashi}},
  \bibinfo {author} {\bibfnamefont {M.}~\bibnamefont {K\"{o}rner}}, \bibinfo
  {author} {\bibfnamefont {A.}~\bibnamefont {Kozhevnikov}}, \bibinfo {author}
  {\bibfnamefont {A.}~\bibnamefont {L\"{a}uchli}}, \bibinfo {author}
  {\bibfnamefont {S.}~\bibnamefont {Manmana}}, \bibinfo {author} {\bibfnamefont
  {M.}~\bibnamefont {Matsumoto}}, \bibinfo {author} {\bibfnamefont
  {I.}~\bibnamefont {McCulloch}}, \bibinfo {author} {\bibfnamefont
  {F.}~\bibnamefont {Michel}}, \bibinfo {author} {\bibfnamefont
  {R.}~\bibnamefont {Noack}}, \bibinfo {author} {\bibfnamefont {G.~P.}\
  \bibnamefont {owski}}, \bibinfo {author} {\bibfnamefont {L.}~\bibnamefont
  {Pollet}}, \bibinfo {author} {\bibfnamefont {T.}~\bibnamefont {Pruschke}},
  \bibinfo {author} {\bibfnamefont {U.}~\bibnamefont {Schollw\"{o}ck}},
  \bibinfo {author} {\bibfnamefont {S.}~\bibnamefont {Todo}}, \bibinfo {author}
  {\bibfnamefont {S.}~\bibnamefont {Trebst}}, \bibinfo {author} {\bibfnamefont
  {M.}~\bibnamefont {Troyer}}, \bibinfo {author} {\bibfnamefont
  {P.}~\bibnamefont {Werner}}, \ and\ \bibinfo {author} {\bibfnamefont
  {S.}~\bibnamefont {Wessel}},\ }\href {\doibase 10.1016/j.jmmm.2006.10.304}
  {\bibfield  {journal} {\bibinfo  {journal} {J. Magn. Magn. Mater.}\ }\textbf
  {\bibinfo {volume} {310}},\ \bibinfo {pages} {1187 } (\bibinfo {year}
  {2007})}\BibitemShut {NoStop}%
\bibitem [{\citenamefont {Troyer}(1999)}]{ALPS-diag}%
  \BibitemOpen
  \bibfield  {author} {\bibinfo {author} {\bibfnamefont {M.}~\bibnamefont
  {Troyer}},\ }in\ \href@noop {} {\emph {\bibinfo {booktitle} {Lecture Notes in
  Computer Science}}},\ Vol.\ \bibinfo {volume} {1732}\ (\bibinfo  {publisher}
  {Springer},\ \bibinfo {year} {1999})\ p.\ \bibinfo {pages} {164}\BibitemShut
  {NoStop}%
\end{thebibliography}%

\end{document}